\definecolor{grey}{gray}{0.75}
\newtheorem{theorem}{Theorem}
\newtheorem{lemma}[theorem]{Lemma}
\newdefinition{problem}[theorem]{Problem}
\newdefinition{definition}[theorem]{Definition}
\newproof{proof}{Proof}
\newproof{firstproof}{First proof}
\newproof{secondproof}{Second proof}
\newcommand\old[1]{}
\def\qed{{\hfill$\Box$}}
\def\P{{\mathcal{P}}}
\def\R{\mathbf{root}}
\def\cost{\mathrm{cost}}
\begin{document}

\begin{frontmatter}

\title{Balanced Vertices in Trees and a Simpler Algorithm to Compute the Genomic Distance\tnoteref{tn1}}

\author[renyi]{P{\'e}ter L. Erd{\H o}s\fnref{fn1}}
\ead{elp@renyi.hu}

\author[renyi]{Lajos Soukup\fnref{fn2}}
\ead{soukup@renyi.hu}

\author[bi]{Jens Stoye\corref{cor}}
\ead{stoye@techfak.uni-bielefeld.de}

\address[renyi]{Alfr\'ed R{\'e}nyi Institute of Mathematics,
  Budapest, P.O.~Box 127, H-1364 Hungary}
\address[bi]{Universit\"at Bielefeld, Technische Fakult\"at,
  AG Genominformatik, 33594 Bielefeld, Germany}

\fntext[fn1]{Partly supported by Hungarian NSF, under contract Nos. AT048826 and K68262.}
\fntext[fn2]{Partly supported by Hungarian NSF, under contract Nos. K61600 and K68262.}
\tnotetext[tn]{This research was carried out when the first and second authors visited the third author at Bielefeld University with the support of the Hungarian Bioinformatics MTKD-CT-2006-042794, Marie Curie Host Fellowships for Transfer of Knowledge.}
\cortext[cor]{Corresponding author}

\begin{abstract}
This paper provides a short and transparent solution for the covering cost of white--grey trees which play a crucial role in the algorithm of Bergeron {\it et al.}\ to compute the rearrangement distance between two multichromosomal genomes in linear time ({\it Theor. Comput. Sci.}, 410:5300--5316, 2009). In the process it introduces a new {\em center}
notion for trees, which seems to be interesting on its own.
\end{abstract}

\begin{keyword}
comparative genomics \sep genome rearrangement \sep combinatorics on trees
\end{keyword}

\end{frontmatter}

\section{Introduction}\label{sec:intro}
\noindent Computational comparative genomics is a subdiscipline of computational biology in which the relationships between two or more genomes are studied by computational means. A highly relevant question in this field is the calculation of the minimum number of rearrangement operations (reversals, translocations, fusions and fissions) that are necessary to transform one given genome into another one, the so-called \emph{genome rearrangement problem} \cite{hp95}.

The \emph{white--grey tree cover problem} studied in this paper (formally defined in Section~\ref{sec:prob}) arises as a subproblem of the genome rearrangement problem, and so far only an unsatisfactory (and not self-contained) solution exists~\cite{bms09}. The main goal of this paper is to give a short solution of the problem and to correct some omissions and discrepancies of the original formulation. (In Section \ref{sec:proof} we point out some cases where the original formulation fails.) Moreover, it gives rise to a combinatorial problem on trees, detailed in Section~\ref{sec:bal}, that seems to be interesting on its own. Since one of our main concerns here is brevity, we (usually) don't give detailed proofs of easy facts, which are not essential for our main goal.

\section{Problem definition}\label{sec:prob}
\noindent A \emph{white--grey tree} is a rooted tree with (white or grey) colored and uncolored vertices. The $\R$ is uncolored, some children of the $\R$ are grey (some of them can be leaves), all leaves which are not children of the $\R$ are white. All uncolored vertices (with the possible exception of the root) are branching points.

A system of paths in a white--grey tree is a {\em colored cover} if:
\begin{enumerate}[{\rm (i)}]
\item Each path has colored endpoints. One vertex alone may constitute a path.
\item Each colored vertex is covered with path(s).
\end{enumerate}

The {\em cost} of a path $P$ is denoted by $\cost(P)$ and is defined as follows:
\begin{enumerate}[{\rm (i)}]
\item $P$ is {\em short} if it has exactly one vertex. Then $\cost(P)=1.$
\item $P$ is {\em grey} if its endpoints are grey vertices (then the third vertex is the uncolored $\R$). Then $\cost(P)=1.$
\item $P$ is {\em long} otherwise. Its cost is $\cost(P)=2.$
\end{enumerate}

\begin{definition}
The {\em cost of a path system} $\P$ is the sum of the individual costs: $\cost(\P):= \sum_{P\in \P} \cost(P)$. A colored cover $\P$ is an {\em optimal} one for a given white--grey tree $T$ if it has minimal cost among all possible colored covers, denoted by $\cost(T)$.
\end{definition}

\begin{problem}[White--grey tree cover problem]
Given a white--grey tree $T$, compute $\cost(T)$.
\end{problem}
The main result of this paper is a simple way to calculate the exact cost of an optimal cover. We are not quite ready to formalize the main result (without some further observations it would require a detailed case analysis), but we mention here a well known fact~\cite{hp95}:

\begin{lemma}\label{lm:bounds}
Let $T$ be a white--grey tree with $w$ white and $g$ grey leaves, then:
\begin{displaymath}
\hspace{3cm} w + \lceil g/2 \rceil \;\le\; \cost(T) \;\le\; w + \lceil g/2 \rceil + 1. \hspace{4cm}\Box
\end{displaymath}

\end{lemma}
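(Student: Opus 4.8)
For the lower bound I plan a simple fractional charging argument. Two structural remarks drive it: first, a leaf of $T$ has degree one, so every white leaf and every grey leaf can occur only as an \emph{endpoint} of a path of a cover, never in its interior; second, since every grey vertex is a child of the root, a path with two grey endpoints is necessarily the three-vertex path grey--root--grey, i.e.\ a grey path, and hence a \emph{long} path always has exactly two endpoints, at least one white and at most one grey. Now fix an arbitrary colored cover $\P$, give each white leaf a demand of $1$ and each grey leaf a demand of $\frac12$, and let every such leaf charge its demand to one path of $\P$ of which it is an endpoint (at least one exists because the leaf is covered). Running through the three types of paths -- short ($\cost=1$, one colored vertex), grey ($\cost=1$, two grey endpoints), long ($\cost=2$, two endpoints of which at most one is grey) -- one checks in each case that the total demand the path can be asked to carry is at most its cost. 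Summing over $\P$ gives $\cost(\P)\ge w+\frac{g}{2}$, and integrality of $\cost(\P)$ upgrades this to $\cost(\P)\ge w+\lceil g/2\rceil$; since $\P$ was arbitrary, $\cost(T)\ge w+\lceil g/2\rceil$.

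For the upper bound I would write down an explicit cover. First pair up the grey leaves into $\lfloor g/2\rfloor$ grey paths, adding one short path if $g$ is odd; this covers all grey leaves at cost $\lceil g/2\rceil$. It remains to cover the white leaves together with those grey vertices that are not leaves, say $v_1,\dots,v_p$: these are distinct children of the root, their subtrees are pairwise disjoint, and each of these subtrees contains at least one white leaf. The point is to route long paths through the $v_i$ without wasting their cost: pairing the $v_i$ two at a time, a path that runs from a white leaf below $v_i$ up through $v_i$, across the root and down through $v_j$ to a white leaf below $v_j$ has cost $2$ and takes care of two of the $v$'s and two white leaves at once; the remaining white leaves are then paired off by long paths, with one short path for the last one if their number is odd, which costs exactly that number. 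A short bookkeeping shows the total comes to $w+\lceil g/2\rceil$ in the generic case and to $w+\lceil g/2\rceil+1$ only in a few parity-pinned shapes -- for instance when $p$ is odd and equals $w$, or when the root has a single child which is a grey vertex with just one child of its own -- where one can check that one extra unit is unavoidable.

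I expect the upper bound to be the real work. One has to make sure that enough distinct white leaves are still available as the paths through the $v_i$ are constructed, that the ``other'' root-subtree needed for such a path actually exists, and that the handful of degenerate configurations (a grey vertex of out-degree one, or all proper descendants of the root sitting below one grey child) are dispatched by hand; these are exactly the situations where the additional $+1$ is genuinely forced, and presumably where the earlier formulation needs correcting. The lower bound, by contrast, should be routine once the two structural remarks above are in place.
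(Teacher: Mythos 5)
Your lower bound is fine, and it is genuinely self-contained, which the paper's treatment is not: the paper states Lemma~\ref{lm:bounds} without proof, citing \cite{hp95}, so there is no in-paper argument to compare with. Your two structural remarks are exactly right: a leaf can only occur as an endpoint (or as a singleton path), and a path with two grey endpoints is by definition the grey path of cost $1$, so every long path has at most one grey endpoint. The charging (demand $1$ per white leaf, $1/2$ per grey leaf, charged to one covering path) then gives $\cost(\P)\ge w+g/2$ for every colored cover, and integrality yields the stated lower bound.

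The upper bound is where the gap is, and it is larger than a matter of ``degenerate configurations to dispatch by hand''. A colored cover must cover \emph{every} colored vertex, and besides the white leaves and the non-leaf grey children of the root the tree may contain white non-leaf vertices of degree two (the vertices the paper later calls dangerous are one instance). Your bookkeeping counts only the cost of the paths, never their coverage: pairing ``the remaining white leaves'' by long paths in an unspecified way can leave such internal white vertices uncovered, and repairing this can cost more than the one extra unit you have reserved. The ingredient that makes the budget work is exactly what the paper proves as Theorem~\ref{th:balanced}: the leaves of $T_w$ can be paired so that all resulting paths pass through one common (balanced) vertex and hence cover every vertex of $T_w$; combined with the grey--grey pairing and the bookkeeping of Lemma~\ref{lm:white-cost} and Lemma~\ref{lm:reduction}, that is what a complete proof of the upper bound looks like. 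Be aware, finally, that the ``$+1$'' is delicate under the paper's literal definition: take a root whose only child is a grey non-leaf heading a chain of colored degree-two vertices down to the first branching vertex $B$, and attach to $B$ four pendant chains of length two ending in white leaves (so $w=4$, $g=0$, and $T_w$ has no short leaf). A direct count (each path covers at most four colored vertices, while ten must be covered) gives cost $6=w+2$, and the paper's own Theorem~\ref{tm:counting}(2) gives the same value, exceeding $w+\lceil g/2\rceil+1$. So the upper bound as stated needs either extra structure of the trees actually arising in \cite{hp95,bms09} or an explicit treatment of such shapes; a cost-only ``short bookkeeping'' cannot settle it.
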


\section{Balanced vertices in trees}\label{sec:bal}
\noindent In this section we prove a useful tool for (unrooted) trees which seems to be interesting in its own. In tree $T'$ denote by $P_{u,v}$ the unique path between vertices $u$ and $v$.

\begin{theorem}\label{th:balanced}
Let $T'$ be a tree with $2n$ leaves. Then there exists a vertex $v\in V(T')$ and a bijection among the leaves $\alpha: \mathcal{L} \rightarrow \mathcal{L}$ such that the path system $P_{\ell, \alpha(\ell)}$ (where $\ell \in \mathcal{L}$) covers each vertex in $T'$ and all paths contain $v.$
\end{theorem}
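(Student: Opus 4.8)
The plan is to derive the statement from the existence of what one might call a \emph{leaf-centroid} of $T'$: a vertex $v$ for which every connected component of $T'-v$ contains at most $n$ of the $2n$ leaves. Suppose for the moment such a $v$ is available and, moreover, is not itself a leaf of $T'$, so that the leaves of $T'$ are genuinely partitioned among the components of $T'-v$, each block having size at most $n$. I would then number the leaves $0,1,\dots,2n-1$ so that the leaves lying in any one component of $T'-v$ occupy a set of indices consecutive modulo $2n$, and take $\alpha$ to be the ``antipodal'' bijection $\ell_p \mapsto \ell_{(p+n)\bmod 2n}$. Since every block is an arc of length at most $n$ on the cycle $\mathbb{Z}/2n\mathbb{Z}$, an index $p$ and its antipode $(p+n)\bmod 2n$ always lie in different blocks, hence $\ell$ and $\alpha(\ell)$ always lie in different components of $T'-v$. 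Consequently each path $P_{\ell,\alpha(\ell)}$ joins two distinct components of $T'-v$ and therefore passes through $v$. For the covering property, root $T'$ at $v$: for any vertex $u\neq v$, the subtree $S_u$ hanging below $u$ lies inside a single component of $T'-v$ and contains at least one leaf $\ell^\ast$ of $T'$; then $\alpha(\ell^\ast)$ lies in another component, hence outside $S_u$, so $P_{\ell^\ast,\alpha(\ell^\ast)}$ must pass through $u$. The root $v$ is covered because every path contains it and (as $n\ge 1$) there is at least one path.

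The remaining work, and the only genuine obstacle, is to produce the leaf-centroid and to check that it is not a leaf. For this I would orient each edge $e$ of $T'$ toward whichever side of $T'-e$ contains strictly more leaves of $T'$, breaking ties arbitrarily, so that every edge receives exactly one orientation. Starting from any vertex and repeatedly following an outgoing edge yields a walk that cannot revisit a vertex — a repeat would produce a closed directed walk, impossible in a tree with each edge oriented once — so the process terminates at a sink $v$. For every neighbour $y$ of $v$, the $y$-side of $T'-\{v,y\}$, which is precisely the component of $T'-v$ containing $y$, cannot be the strictly heavier side, so it carries at most $n$ leaves; this is exactly the leaf-centroid condition. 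Finally, when $n\ge 2$ the sink cannot be a leaf, since a leaf of $T'$ has a single incident edge whose opposite side carries $2n-1>n$ leaves, which would force an outgoing orientation.

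It remains to dispose of the case $n=1$, which is immediate: a tree with two leaves is a path, so one may take $v$ to be any vertex and $\alpha$ the transposition of the two leaves, the single resulting path covering all of $T'$.

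Thus I would present the argument in three blocks: the leaf-centroid lemma via the orientation/sink construction, the construction of $\alpha$ by cyclic numbering together with the antipodal map, and the two short verifications that every path contains $v$ and that every vertex is covered. I expect the sink argument to be the part most worth a careful, global formulation, since a purely local ``move toward the heavier component'' argument stalls at degree-two vertices and at ties.
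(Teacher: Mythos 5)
Your proof is correct, and it is essentially the paper's second proof carried out in full detail rather than a genuinely different argument. Your ``leaf-centroid'' is exactly the paper's \emph{balanced vertex}: for a non-leaf $v$ the condition $\delta(v,e)\le\sum\bigl\{\delta(v,f): f\in E(v),\ f\ne e\bigr\}$ reads $\delta(v,e)\le 2n-\delta(v,e)$, i.e.\ no component of $T'-v$ contains more than $n$ leaves. Where the paper only sketches, you fill in the two missing pieces. Its existence argument (Lemma~\ref{lm:bal}) is a local ``step across the oversaturated edge'' walk whose termination is dispatched by ``finiteness of the graph''; your edge-orientation/sink construction is a cleaner global packaging of the same idea, since an edge is oversaturated at $v$ exactly when it points away from $v$ in your orientation, so a sink is a balanced vertex. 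More importantly, the pairing step (Lemma~\ref{lm:path}) is explicitly ``left to the diligent reader'' in the paper, whereas you prove it: numbering the leaves so that each component of $T'-v$ occupies a cyclic arc of length at most $n$ and applying the antipodal map $p\mapsto (p+n)\bmod 2n$ forces every pair into distinct components, hence through $v$, and the covering check by rooting at $v$ is exactly right. In effect you transplant the pairing device of the paper's first proof (planar enumeration plus antipodal matching plus the Gy\'arf\'as--Lehel Helly property) to the centroid produced by the second proof, which lets you dispense with the Helly property altogether. Your attention to the degenerate possibility that the balanced vertex is a leaf, which you correctly confine to $n=1$ and settle directly, is a point the paper does not address.
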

We offer here two proofs. One gives a very simple algorithm to construct such a cover, but it clearly cannot provide all possible solutions. The second proof is based on a necessary and sufficient reformulation of the statement.

\begin{firstproof}
Consider an embedding of our tree into the plane and enumerate the leaves in a counter clockwise fashion. One way to obtain such a numbering is to fix a leaf as a root and take the left-to-right, depth first traversal of the tree which conforms with the embedding.

Now we define our bijection with the formula $\alpha: \ell_i \mapsto \ell_{i+n} \textrm{ mod } 2n$. Considering any two such paths, their endpoints alternate along the circle which contains the leaves in increasing order. Therefore these two paths clearly intersect each other.

As it is well known (its very first proof is due to Gy\'arf\'as and Lehel, \cite{gl70}) if (in a tree) a set of paths does not contain two disjoint paths, then all the paths share a common vertex $v$. And because these paths connect $v$ to the leaves, they cover all edges of the tree.
\qed
\end{firstproof}

If $T'$ is a fully balanced tree, then no matter what is the embedding in the previous proof, two close leaves will be paired with two close leaves. Therefore there are clearly solutions which cannot be obtained with the previous method. In the remaining part of this section we sketch a proof which is able to find all possible solutions:

\begin{secondproof}
For each vertex-edge pair $(v,e)$ denote by $\delta (v,e)$ the number of leaves $\ell$ in $T'$ such that $P_{v,\ell}$ contains the edge $e$ (where $v\in V(T')$, $e\in E(T')$ and $v \in e$). Furthermore, denote by $E(v)$ the set of edges that contain $v$.

In the configuration required by Theorem~\ref{th:balanced}, vertex $v$ clearly satisfies the inequality:
\begin{equation}\label{eq:bal}
\forall e\in E(v) \quad : \quad \delta(v,e) \le \sum \big\{ \delta(v,f) : f \in E(v), f \ne e \big\}.
\end{equation}
Such a vertex $v\in T'$ is called a {\em balanced} vertex. (If a vertex-edge pair does not satisfy this inequality, then the pair is called {\em oversaturated}.) As a matter of fact, this property is just equivalent to the existence of the required cover:

\begin{lemma}\label{lm:path}
Let $T'$ be a tree with $2n$ leaves, $n\geq 1$. Then for any balanced vertex $v$ there exists a bijection $\alpha$ such that the paths $P_{\ell,\alpha(\ell)}$ cover all edges, and all paths contain $v.$
\end{lemma}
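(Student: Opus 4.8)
The plan is to reformulate the balanced condition (\ref{eq:bal}) as a statement about the sizes of the components of $T'-v$, and then to build $\alpha$ explicitly by a cyclic pairing — the same antipodality idea used in the first proof, but now anchored at the prescribed vertex $v$. First I would observe that a balanced vertex is never a leaf: if $v$ were a leaf, $E(v)$ would consist of a single edge $e$ and the right-hand side of (\ref{eq:bal}) would be an empty sum, so the condition would read $\delta(v,e)\le 0$, whereas $\delta(v,e)=2n-1\ge 1$. Hence all $2n$ leaves differ from $v$, and each contributes exactly one term — the first edge of the path $P_{v,\ell}$ — to $\sum_{e\in E(v)}\delta(v,e)$, giving the identity $\sum_{e\in E(v)}\delta(v,e)=2n$. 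I would also note that each $e\in E(v)$ leads, after its deletion, to a component of $T'-e$ avoiding $v$ that contains a leaf of $T'$, so $\delta(v,e)\ge 1$. Substituting the identity into (\ref{eq:bal}) — equivalently, $\delta(v,e)\le 2n-\delta(v,e)$ — turns the balanced condition into the clean form $\delta(v,e)\le n$ for every $e\in E(v)$. Writing $E(v)=\{e_1,\dots,e_k\}$ and $b_i:=\delta(v,e_i)$, we therefore have $1\le b_i\le n$ and $\sum_{i=1}^{k}b_i=2n$.

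Next I would list the $2n$ leaves in a cyclic order $\ell_0,\ell_1,\dots,\ell_{2n-1}$ in which the $b_1$ leaves reached from $v$ through $e_1$ come first, then the $b_2$ leaves reached through $e_2$, and so on; thus the leaves reached from $v$ through a given edge of $E(v)$ occupy one contiguous arc, of length $b_i\le n$. I would then set $\alpha(\ell_i):=\ell_{i+n}$, with the index read modulo $2n$. Since $0<n<2n$, this is a fixed-point-free involution, hence a bijection of $\mathcal L$. The claim that carries the lemma is that $\ell_i$ and $\alpha(\ell_i)$ are reached from $v$ through different edges of $E(v)$: were they reached through the same edge, their indices would both lie in a single arc $\{s,s+1,\dots,s+b-1\}$ (mod $2n$) with $b\le n$, and then $n$ would be the difference modulo $2n$ of two members of that arc, which forces $b\ge n+1$, a contradiction. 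Since $\ell_i$ and $\alpha(\ell_i)$ then lie in different components of $T'-v$, every path $P_{\ell,\alpha(\ell)}$ passes through $v$, and in fact $P_{\ell,\alpha(\ell)}=P_{\ell,v}\cup P_{v,\alpha(\ell)}$.

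It remains to verify that these $n$ paths cover every edge — and hence every vertex — of $T'$, which I expect to be routine. Given an edge $e$, let $C$ be the component of $T'-e$ not containing $v$; being a nonempty subtree, $C$ contains a leaf $\ell'$ of $T'$ (if $|C|\ge 2$, take a leaf of $C$ different from the endpoint of $e$ lying in $C$; if $|C|=1$, that single vertex is already a leaf of $T'$), and then $e\in E(P_{v,\ell'})\subseteq E(P_{\ell',\alpha(\ell')})$. This finishes the proof. The one point deserving genuine care is the cyclic-pairing claim, together with the degenerate bookkeeping — $v$ a leaf, an edge of $E(v)$ with no leaf below it, or $n=1$ (where $T'$ is simply a path) — which the remarks above dispose of. Alternatively, one could invoke the classical fact that the complete multipartite graph $K_{b_1,\dots,b_k}$ has a perfect matching precisely when $\sum_i b_i$ is even and no part exceeds half of $\sum_i b_i$; the explicit construction above, however, is shorter and stays in the spirit of the first proof.
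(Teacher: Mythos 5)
Your proof is correct and complete. The paper itself gives no argument for this lemma (it is ``left to the diligent reader,'' with a hint to use induction), so your explicit construction is a genuinely different route from the one the authors had in mind. The two preliminary reductions are exactly what is needed: a balanced vertex cannot be a leaf, hence $\sum_{e\in E(v)}\delta(v,e)=2n$ and condition (\ref{eq:bal}) collapses to $\delta(v,e)\le n$ for every $e\in E(v)$; and the cyclic arrangement of the leaves into contiguous arcs of lengths $b_i\le n$, followed by the antipodal pairing $\ell_i\mapsto\ell_{i+n}$, guarantees that paired leaves lie in different components of $T'-v$ (your arc argument --- the difference of two indices within an arc of length $b\le n$ has absolute value at most $n-1$, so it cannot be congruent to $n$ modulo $2n$ --- is airtight). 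Edge coverage then follows because every component of $T'-e$ avoiding $v$ contains a leaf of $T'$, and every $P_{v,\ell}$ is contained in $P_{\ell,\alpha(\ell)}$. Compared with the suggested induction (presumably: split off a matched pair of leaves from two distinct components of $T'-v$ and verify $v$ remains balanced in the smaller tree, which requires some care), your construction is non-recursive, makes transparent that the balanced condition is precisely the marriage-type condition $\delta(v,e)\le n$, and exhibits the full freedom in choosing $\alpha$ --- any bijection that never pairs two leaves from the same component of $T'-v$ works --- which is exactly the flexibility the paper alludes to after Lemma~\ref{lm:bal}. In effect you have re-anchored the paper's first proof of Theorem~\ref{th:balanced} at a prescribed balanced vertex instead of at whatever common vertex the Gy\'arf\'as--Lehel argument happens to produce.
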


\noindent The easy proof is left to the diligent reader. (One can argue, for example, with mathematical induction.) A balanced vertex in a tree is similar to the well-known notion {\em center} of the tree, but while a center is usually (almost) unique, there may be several balanced vertices.

The following observation completes our second proof of Theorem~\ref{th:balanced}.
\begin{lemma}\label{lm:bal}
Any tree $T'$ with an even number of leaves contains a balanced vertex.
\end{lemma}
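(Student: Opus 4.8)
The plan is to observe that a balanced vertex of $T'$ is precisely a \emph{leaf-weighted centroid}, and to produce one by the familiar centroid-hunting walk, being careful that the walk ends at a non-leaf. The first thing to do is to rewrite the condition (\ref{eq:bal}) in a more usable form: if $v$ is \emph{not} a leaf, then every leaf $\ell$ of $T'$ lies in exactly one component of $T'-v$ and hence is counted by exactly one summand $\delta(v,e)$, $e\in E(v)$, so $\sum_{e\in E(v)}\delta(v,e)=2n$. Consequently, for a non-leaf $v$, (\ref{eq:bal}) is equivalent to $\delta(v,e)\le n$ for every $e\in E(v)$, i.e.\ to the statement that \emph{every component of $T'-v$ contains at most $n$ leaves of $T'$}.

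I would then dispose of the trivial trees: a one-vertex tree is vacuously balanced, and if $T'$ is a single edge then strictly speaking no vertex satisfies (\ref{eq:bal}) (its one incident edge $e$ has $\delta(v,e)=2n-1>0$), but in that case Theorem~\ref{th:balanced} holds directly by pairing the two leaves. So assume $n\ge1$ and that $T'$ has an internal vertex. Now run the walk: start at any non-leaf $v$; if $v$ is not balanced, pick an oversaturated pair $(v,e)$ with $e=\{v,u\}$, so the component of $T'-v$ reached through $u$ has $\delta(v,e)>n\ge1$ leaves; in particular that component is not a single vertex, so $u$ is again a non-leaf. Replace $v$ by $u$ and repeat.

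To see the walk halts, I would track the potential $\Phi(v):=\sum_{\ell\in\mathcal{L}}d_{T'}(v,\ell)$. Moving from $v$ to $u$ across $e$ brings the $\delta(v,e)$ leaves beyond $e$ one step nearer and pushes the remaining $2n-\delta(v,e)$ leaves one step farther, so
\begin{displaymath}
\Phi(u)-\Phi(v)=\bigl(2n-\delta(v,e)\bigr)-\delta(v,e)=2n-2\delta(v,e)<0 .
\end{displaymath}
Thus $\Phi$, a non-negative integer, strictly decreases at every step and the walk must stop, necessarily at a vertex with no oversaturated incident edge — that is, at a balanced vertex, which proves the lemma. Together with Lemma~\ref{lm:path} this completes the second proof of Theorem~\ref{th:balanced}.

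The underlying argument is just the classical centroid walk, so the obstacles are bookkeeping rather than depth: one must notice that the summands $\delta(v,e)$ total exactly $2n$ only when $v$ is a non-leaf (which is what makes the potential strictly decrease at every step, provided the walk is started at and stays among internal vertices), and one must handle by hand the degenerate trees that have no internal vertex.
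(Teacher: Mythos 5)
Your proof is correct and follows essentially the same oversaturated-edge walk as the paper's own (sketched) proof; the added value is that your potential $\Phi$ supplies the termination argument the paper dismisses with ``finiteness of the graph,'' and your observation that the walk stays on internal vertices is exactly what makes $\sum_{e\in E(v)}\delta(v,e)=2n$, and hence the strict decrease of $\Phi$, valid at every step. Your single-edge caveat is also a genuine (if harmless) correction: for that tree no vertex satisfies (\ref{eq:bal}), so the lemma as literally stated fails there, although Theorem~\ref{th:balanced} survives by pairing the two leaves, just as you note.
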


\begin{proof} (Sketch) Assume that a particular vertex $v$ is not balanced. Then there exists an edge $e\in E(v)$ such that the pair $(v,e)$ is oversaturated. We repeat the process with the other end of that edge. If this vertex is not balanced again, then it will provide another oversaturated pair. The finiteness of the graph finishes the proof of the Lemma and this also completes the second proof of Theorem~\ref{th:balanced}. \qed
\end{proof}
\end{secondproof}

The flexibility in the pairing algorithm clearly can provide any possible bijection $\alpha$. It is also interesting to recognize that one can find a suitable balanced vertex quickly:
\begin{lemma}\label{lm:fast}
Let $T'$ be a tree with $2n$ leaves. Then there is a linear (in the number of leaves) time algorithm to find a balanced vertex in $T'$.
\end{lemma}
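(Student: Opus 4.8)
The plan is to recognize that a balanced vertex is nothing but a leaf-weighted centroid of $T'$, and then to run the classical linear-time centroid search. First I would put the balancedness condition into a convenient form. If $v$ is not a leaf, then deleting $v$ cuts $T'$ into $|E(v)|$ pieces, one hanging on each $e\in E(v)$, and $\delta(v,e)$ is simply the number of leaves in the piece hanging on $e$; hence $\sum_{e\in E(v)}\delta(v,e)=2n$. Therefore inequality~(\ref{eq:bal}) for a fixed $e$ says exactly $\delta(v,e)\le 2n-\delta(v,e)$, i.e.\ $\delta(v,e)\le n$, and $v$ is balanced precisely when no branch at $v$ contains more than $n$ of the $2n$ leaves. (For $n\ge 1$ a leaf is never balanced, since its one incident edge carries all $2n-1>0$ remaining leaves, so leaves may be discarded from the search.) This is the usual centroid condition with the $2n$ leaves taken as unit weights.

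Next I would spell out the algorithm. If $T'$ is a single edge the claim is trivial, so assume $T'$ has an internal vertex and root $T'$ at one of them, call it $r$. One post-order traversal computes $L(u):=$ the number of leaves of $T'$ lying in the subtree rooted at $u$, via the recursion $L(u)=\sum_{c}L(c)$ over the children $c$ of $u$, with $L(u)=1$ at a leaf $u$. For a non-root vertex $v$ the branch at $v$ through the edge to its parent contains $2n-L(v)$ leaves, while the branch through the edge to a child $c$ contains $L(c)$ leaves; so by the reformulation above $v$ is balanced iff $L(v)\ge n$ and $L(c)\le n$ for every child $c$ (the first condition being automatic for $v=r$, where $L(r)=2n$). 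Now walk down from $r$: from the current vertex move to a child $c$ with $L(c)>n$ whenever one exists, and stop otherwise. When such a child exists it is unique, because two of them would jointly contain more than $2n$ leaves; the walk never steps onto a leaf ($L=1\le n$ there) and always steps to a strict descendant, so it visits pairwise distinct internal vertices and, $T'$ being finite, it terminates. At its last vertex $v$ every child satisfies $L(c)\le n$, and since $L(r)=2n$ and every step went to a vertex with $L>n$ we have $L(v)\ge n$; hence $v$ is balanced. This monotone, never-backtracking descent is the algorithmic counterpart of the process used in the sketch proof of Lemma~\ref{lm:bal}.

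For the running time, the post-order traversal costs $O(|V(T')|+|E(T')|)=O(|V(T')|)$ and the subsequent descent visits each vertex at most once, so the total is linear in $|V(T')|$. To read this as ``linear in the number of leaves'' one uses that the relevant trees have no vertices of degree $2$ (to the general case one reduces by first contracting every maximal path of degree-$2$ vertices to a single edge, which changes neither the leaf set nor the branches at any surviving vertex): when every internal vertex has degree $\ge 3$, counting edge-endpoints gives at most $2n-2$ internal vertices, so $|V(T')|=O(n)$. I expect the only points that really need care to be exactly these two: that balancedness is equivalent to ``every branch has at most $n$ leaves'' (which is what makes the descent provably non-backtracking, and hence linear rather than quadratic), and the degree-$2$ contraction, which one must check does not delete the last balanced vertex in the degenerate case where $T'$ is a path. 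Everything else is a routine traversal.
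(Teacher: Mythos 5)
Your argument is correct and is exactly the ``simple dynamic programming'' the paper has in mind: the reformulation of inequality~(\ref{eq:bal}) as the leaf-weighted centroid condition $\delta(v,e)\le n$, the post-order computation of subtree leaf counts, and the non-backtracking descent from the root together give a sound linear-time procedure. The only caveat is the degenerate situation you yourself flag: a bare edge (two leaves, $n=1$) has no balanced vertex at all, so that case --- and a path collapsing to a bare edge under your degree-$2$ contraction --- should be handled by returning an internal vertex of the original path rather than dismissed as ``trivial.''
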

\noindent This proof is left to the reader again. A simple dynamic programming algorithm suffices. \qed

\section{Optimal colored covers}\label{sec:proof}
\noindent We are ready to determine the cost of an optimal cover for the white--grey tree $T.$ We say a path in the cover is a {\em mixed} path if it contains at least two colored vertices, exactly one of the colored vertices is a grey leaf. We will use the notation $T_w$ for the subtree derived from $T$ by deleting all grey leaves and their edges, and the $\R$ if it would become a leaf. Furthermore for a path $P$ in $T$ we will use the notation $P\restriction T_w$ to denote the {\em trace} of $P$ on $T_w$, i.e.\ the restriction of $P$ to the nodes of $T_w$ with the extra condition that in the truncated path we delete the starting (if any) uncolored vertices. We extend this notation to the trace of a path system $\P$, $\P\restriction T_w$.

Our general strategy to determine an optimal colored cover is to build it from an optimal colored cover of the subtree $T_w.$ To do that we are going to exploit certain properties -- described in the following result -- of optimal solutions having a minimum number of mixed paths.
\begin{theorem}\label{tm:nice}
Every white--grey tree $T$ has an optimal colored cover $\P$ such that
\begin{enumerate}[(1)]
\item $\P$ contains at most $2$ mixed paths,
\item $\P\restriction T_w$ is an optimal cover of $T_w$,
\item for each mixed path $P\in\P$, $\cost(P)=\cost(P\restriction T_w)$ and so $P\restriction T_w$ is either a long path, or a short path consisting of a single grey leaf.
\end{enumerate}
\end{theorem}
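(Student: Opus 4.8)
The plan is to establish the theorem by a normalization argument: start from an arbitrary optimal colored cover and modify it step by step so that it acquires properties (1)--(3) without increasing the cost, while controlling the number of mixed paths. The natural tool to invoke is Theorem~\ref{th:balanced} (via Lemma~\ref{lm:path}), applied to the set of grey leaves, which are exactly the children of the $\R$; together with the grey path through the $\R$ this explains why only few mixed paths are genuinely needed.

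\medskip

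\textbf{Step 1: reduce to a cover with a bounded number of mixed paths.} First I would observe that if $\P$ is an optimal cover and it contains three or more mixed paths, one can reroute them. Each mixed path $P$ has one endpoint at a grey leaf (a child of the $\R$) and one endpoint at a white vertex; $\cost(P)=2$. If there are at least three such paths, the grey endpoints together with the $\R$ admit, by the balanced-vertex/pairing idea behind Theorem~\ref{th:balanced}, a repackaging into grey paths (pairs of grey leaves through the $\R$, each of cost $1$) plus at most one leftover grey leaf, while the white endpoints and the white parts of the old mixed paths are re-covered by short or long paths already present or newly created inside $T_w$. The bookkeeping here is just Lemma~\ref{lm:bounds}: the cost is pinned down up to an additive $1$ by $w$ and $\lceil g/2\rceil$, so as long as the modification does not create new white leaves needing fresh coverage, the total cost cannot go up. After this step we may assume $\P$ has at most $2$ mixed paths; the parity of $g$ dictates whether we can get down to $0$, $1$, or need $2$.

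\medskip

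\textbf{Step 2: make the trace optimal on $T_w$.} Next I would argue that, among all optimal covers of $T$ with at most $2$ mixed paths, choose one for which $\cost(\P\restriction T_w)$ is as small as possible, and show this forces $\P\restriction T_w$ to be an \emph{optimal} cover of $T_w$. The point is that $\P\restriction T_w$ is always \emph{a} colored cover of $T_w$: every white leaf of $T_w$ is colored in $T$ and hence covered, and the deletion rule for leading uncolored vertices guarantees the traces still have colored endpoints. If $\P\restriction T_w$ were not optimal, one could substitute an optimal cover $\P_w$ of $T_w$ and re-attach the grey leaves of $T$ by grey paths (and at most one or two mixed paths), contradicting optimality of $\P$ in $T$ — again the cost accounting is governed by Lemma~\ref{lm:bounds} applied to both $T$ and $T_w$, whose white-leaf counts coincide and whose grey-leaf counts differ in a controlled way.

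\medskip

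\textbf{Step 3: fix the cost of each mixed path and conclude.} Finally, for each mixed path $P$ in the resulting cover, I would check that we may assume $\cost(P)=\cost(P\restriction T_w)$. A mixed path $P$ in $T$ has cost $2$; its trace $P\restriction T_w$ is obtained by stripping the grey-leaf endpoint and any leading uncolored vertices, leaving either a genuine long path in $T_w$ (cost $2$) or, in the degenerate case where $P$ ran from a grey leaf through the $\R$ to essentially nothing else in $T_w$, we instead re-describe $P$ so that its trace is the single grey leaf treated as a short path (cost $1$) — but then the grey leaf is being covered by a grey or short path anyway and $P$ should not have been counted as mixed. Choosing, among minimizers in Step~2, one that in addition minimizes $\sum_{P \text{ mixed}} \cost(P\restriction T_w)$ kills the bad case. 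This yields all three properties simultaneously. \textbf{The main obstacle} I anticipate is Step~1: showing precisely that collapsing $\ge 3$ mixed paths down to $2$ can always be done without creating a newly-uncovered white leaf and without raising the cost — this is where the structure of white--grey trees (all non-root-children leaves are white, uncolored non-root vertices are branching) and the balanced-vertex machinery must be combined carefully, and it is exactly the place where, as the authors hint, the original formulation in~\cite{bms09} has gaps.
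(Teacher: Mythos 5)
Your overall shape (normalize an optimal cover in three stages) matches the paper, but the crucial first stage is not actually proved, and the tools you reach for there cannot do the job. Lemma~\ref{lm:bounds} only pins down the \emph{optimal} cost to within an additive $1$; it says nothing about the cost of the particular cover you obtain after rerouting, and even a cost increase of $1$ would destroy optimality, so "the total cost cannot go up" does not follow from it. Likewise Theorem~\ref{th:balanced} pairs up \emph{leaves of a tree} so that the paths share a vertex and cover all edges --- it is not the mechanism for repackaging three grey--white mixed paths, and it does not address the point you yourself flag as the obstacle: what covers the white portions of the dismantled mixed paths. The paper's argument here is a purely local exchange that you are missing: take an optimal cover with the \emph{minimum number} of mixed paths, suppose it has three mixed paths $P_i=P_{g_i,c_i}$ (with the $g_i$ distinct, else one can shorten a path), observe that their common intersection is a path $P$ from the $\R$ to some vertex $c$ with $P_{c_1,c}$ and $P_{c_2,c}$ edge-disjoint, and replace $\{P_1,P_2,P_3\}$ by $\{P_{g_1,g_2},\,P_{c_1,c_2},\,P_3\}$. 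Coverage is preserved because $P_{c_1,c_2}$ covers everything of $P_1\cup P_2$ below $c$ while $P_3$ still covers the segment from the $\R$ to $c$; the cost drops from $6$ to $5$, and the number of mixed paths drops --- contradiction. No balanced-vertex machinery and no parity-of-$g$ discussion is needed or used for property (1).

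Your Steps 2 and 3 are closer in spirit but also have gaps. In Step 2 you propose to re-attach the grey leaves "by grey paths (and at most one or two mixed paths)," which reintroduces the very objects you are controlling; the paper instead covers the (at most two) grey leaves that were on mixed paths by a \emph{single} path of cost $1$, yielding a cover with no mixed paths at all and cost at most $\cost(\P)$ --- a direct comparison, again not an appeal to Lemma~\ref{lm:bounds}. In Step 3 your description of the degenerate case is off: grey leaves are deleted in $T_w$, so the trace of a mixed path is never "the single grey leaf" endpoint; the actual bad case is $\cost(P)=2$ while $P\restriction T_w$ is a short path on a single white vertex $u$, i.e.\ $P=(u,\R,g)$, and it is eliminated by splitting $P$ into two short paths (same cost, fewer mixed paths), so it is again the minimality of the number of mixed paths --- not a secondary minimization of $\sum\cost(P\restriction T_w)$ --- that finishes the argument.
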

\begin{proof}
(1) Let $\P$ be an optimal cover with a minimum number of mixed paths. Assume on the contrary that $\P$ contains three mixed paths: $P_1, P_2$ and $P_3$, where $P_i$ is a path from the grey leaf $g_i$ to the colored vertex $c_i\in T_w$. (If two paths cover the same grey leaf then deleting that leaf from one of the paths decreases the number of the mixed paths in the cover. So we may assume that the grey leaves are pairwise distinct.)

Let the path $P$ be the intersection of the paths $P_1,P_2, P_3$. Clearly $P$ is a path from the {\bf root} to some $c\in T_w$. (It is clear that $c$ may be the $\R$ itself). Since $c$ is the ``last'' point of the intersection, we can assume that the unique sub-paths $P_{c_1, c}$ and $P_{c_2, c}$ are edge disjoint (and of course vertex-disjoint except vertex $c$). Then replace the paths $\{P_1, P_2, P_3\}$ in $\P$ with the paths $\{ P_{g_1,g_2}, P_{c_1,c_2}, P_3 \}$ to obtain a path cover $\P'$. But $\cost(\P')\le \cost(\P)$ and $\P'$ contains less mixed paths than $\P$ -- a contradiction.

(2) So we have an optimal cover which contains at most two mixed paths. If its trace is not optimal then consider the following cover $\mathcal{Q}$: cover $T_w$ optimally (this has cost at least 1 smaller than the trace of the original cover had), keep the paths from $\P$ which do not intersect $T_w$ and finally cover the (at most two) grey vertices that were covered by the mixed path(s) with a path whose cost is 1. Then the cost of $\mathcal{Q}$ is less than or equal to the cost of $\P$, and $\mathcal{Q}$ does not contain any mixed path.

(3) Assume that $2 = \cost(P) > \cost(P\restriction T_w) = 1$ for a mixed path $P\in \P$. The restriction $P \restriction T_w$ should be a short white path covering vertex $u$, and $P$ is a path $P_{u,g}=(u,\R,g)$ for a grey leaf $g$. Replacing the path $P$ with two short paths covering $u$ and $g$ resp.\ keeps the cost of the cover, but decreases the number of mixed paths. \qed
\end{proof}
A cover $\P$ is {\em nice} iff it satisfies the requirements of Theorem \ref{tm:nice}. Let $P$ be a path in $T_w$. We say that path $P$ is {\em free} iff $P$ can be extended to path $P'$ such that $P'$ contains a grey leaf while $\cost(P) = \cost(P')$ holds. Theorem \ref{tm:nice} implies the following statement:

\begin{lemma}\label{lm:reduction}
Assume that $T$ is a white--grey tree which has $g$ grey leaves.
\begin{displaymath}
\cost(T)= \cost(T_w) + \max\left\{0, \left\lceil \frac{g-f}2 \right\rceil \right\},
\end{displaymath}
where $f$ is the maximal number of free paths in a nice optimal cover of $T_w$. \qed
\end{lemma}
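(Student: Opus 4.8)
The plan is to establish the equality by two inequalities, using Theorem~\ref{tm:nice} for the ``$\le$'' direction and Lemma~\ref{lm:bounds} together with a counting argument for the ``$\ge$'' direction.

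For the upper bound, I would start from a nice optimal cover $\mathcal{Q}$ of $T_w$ realizing the maximal number $f$ of free paths. Pick $\min\{f,g\}$ of these free paths and extend them, at no extra cost, so that each absorbs one grey leaf; this turns them into mixed paths (or grey-leaf-covering paths) but by definition of ``free'' the cost is unchanged. The remaining $\max\{0, g-f\}$ grey leaves must still be covered; pair them up greedily into grey paths through the $\mathbf{root}$ (cost $1$ each, covering two leaves) with possibly one leftover short grey path, for a total added cost of $\lceil (g-f)/2\rceil$ when $g>f$ and $0$ otherwise. (A small point to check: extending free paths does not interfere with one another, i.e.\ one can extend all chosen free paths simultaneously; since each extension only adds a pendant grey leaf and its edge to the root, and distinct grey leaves give edge-disjoint extensions, this is fine — but the details of how two simultaneous extensions through the root behave is exactly the kind of thing one must verify.) This yields a colored cover of $T$ of cost $\cost(T_w) + \max\{0, \lceil (g-f)/2\rceil\}$, hence $\cost(T) \le \cost(T_w) + \max\{0, \lceil (g-f)/2\rceil\}$.

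For the lower bound, take a nice optimal cover $\mathcal{P}$ of $T$ as guaranteed by Theorem~\ref{tm:nice}. By part~(2) its trace $\mathcal{P}\restriction T_w$ is an optimal cover of $T_w$, so it has cost exactly $\cost(T_w)$. Now I would compare $\mathcal{P}$ with its trace path by path. By part~(3) every mixed path of $\mathcal{P}$ has the same cost as its trace and its trace is a free path in the sense defined after Theorem~\ref{tm:nice} (it can be re-extended to reach a grey leaf at no extra cost — indeed it already was so extended). The key accounting is: the grey leaves of $T$ are covered only by mixed paths and by paths lying entirely outside $T_w$ (short grey paths or grey paths through the root). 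A mixed path, being the extension of a trace path, handles exactly one grey leaf; a grey path handles two and costs $1$; a short grey path handles one and costs $1$. Since by part~(1) there are at most two mixed paths, the number of grey leaves absorbed ``for free'' by mixed paths is at most the number of free paths available in the trace, which is at most $f$. The remaining at least $g - f$ grey leaves are covered by the genuinely grey-only paths, whose total cost is at least $\lceil (g-f)/2\rceil$. Adding the cost $\cost(T_w)$ of the trace gives $\cost(\mathcal{P}) = \cost(T) \ge \cost(T_w) + \max\{0,\lceil (g-f)/2\rceil\}$.

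The main obstacle I anticipate is making the lower-bound bookkeeping airtight: one must argue that the grey leaves absorbed by mixed paths in $\mathcal{P}$ really do correspond to distinct free paths in the particular optimal trace cover $\mathcal{P}\restriction T_w$, and that the trace cover one ends up with can be taken to be the one achieving the maximum $f$ (or at least that its number of free paths is not exceeded). In particular one has to rule out that a mixed path's trace is ``free'' only because of the presence of the grey leaf, i.e.\ that freeness is genuinely a property of the path inside $T_w$; part~(3) of Theorem~\ref{tm:nice} is what makes this work, since it forces the trace to be either a long path or a single grey leaf, and in either case the extension to a grey leaf through the root costs nothing. The upper-bound direction is comparatively routine once the simultaneous-extension point is checked.
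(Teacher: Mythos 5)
Your proof is correct; the paper in fact omits this argument entirely (it merely asserts that the lemma follows from Theorem~\ref{tm:nice}), and your two-inequality derivation — extending the $\min\{f,g\}$ free paths of an optimal cover of $T_w$ and pairing the remaining grey leaves through the $\R$ for the upper bound, and using the nice cover from Theorem~\ref{tm:nice} with the grey-leaf accounting for the lower bound — is exactly the intended reasoning. The two points you flag do go through: simultaneous extensions are harmless because a colored cover need not consist of disjoint paths (overlaps through the $\R$ cost nothing), and the trace $\P\restriction T_w$ is an optimal, hence automatically nice, cover of $T_w$ (which has no grey leaves), so by part (3) the traces of the mixed paths are free paths of that cover and their number is bounded by $f$.
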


\medskip\noindent Next we should solve the white--grey tree cover problem for the subtree $T_w.$ Therefore we first solve the problem for trees where (essentially) all leaves are white. In what comes, we will say a leaf is {\em short} if it is adjacent to a branching vertex.
\begin{lemma}\label{lm:white-cost}
Let $T'$ be a white--grey tree with $w$ colored leaves but without a grey vertex or with exactly one grey leaf. Then the minimal cost of a colored cover is:
\begin{equation}\label{eq:white-cost}
\cost(T')=
\left\{
  \begin{array}{ll}
    w+1, & \hbox{if } w \hbox{ is odd and there is no short leaf} ;\\
    w, & \hbox{otherwise.}
  \end{array}
\right.
\end{equation}
\end{lemma}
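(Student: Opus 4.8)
The plan is to read off the lower bound $\cost(T')\ge w$ from Lemma~\ref{lm:bounds}, to match it with an explicit colored cover built from the balanced‑vertex Theorem~\ref{th:balanced} whenever $w$ is even or a short leaf is present, and to show that the extra unit cannot be avoided in the one remaining case by a counting argument exploiting the degree‑two neighbours that non‑short leaves are forced to have.

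Since $T'$ has $g\le 1$ grey leaves, Lemma~\ref{lm:bounds} already yields $w\le\cost(T')\le w+1$, so in each case only the extra unit is in question. For $w$ even, view $T'$ as an unrooted tree with $w=2n$ leaves and apply Theorem~\ref{th:balanced}: it gives a pairing $\alpha$ of the leaves for which the $n$ paths $P_{\ell,\alpha(\ell)}$ together cover every vertex of $T'$, hence every colored vertex. As $T'$ has at most one grey leaf, no such path has two grey endpoints, so each is a long path of cost $2$, whence $\cost(T')\le 2n=w$; if a grey leaf is present, the path through it runs through the root, exactly as a cover of that leaf requires. For $w$ odd but with a short leaf $\ell$, adjacent to a branching vertex $b$, delete $\ell$ and its edge: since $b$ retains degree at least $2$, no new leaf appears, so $T'-\ell$ has $w-1$ (even) colored leaves and can be covered at cost $w-1$ as above; adjoining the one‑vertex path $\{\ell\}$ gives a colored cover of $T'$ of cost $w$. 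In either case $\cost(T')=w$.

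The delicate case — the step I expect to be the real obstacle — is the lower bound $\cost(T')\ge w+1$ when $w$ is odd and $T'$ has no short leaf. Suppose $\P$ is a colored cover with $\cost(\P)=w$. Inspecting the cost function, equality forces a rigid shape on $\P$: it contains no path missing every leaf, each colored leaf lies on exactly one path, every path covering a single colored leaf is the one‑vertex path on that leaf, every path covering two colored leaves is a long path joining them, and therefore every colored non‑leaf vertex is an interior vertex of one of these long leaf‑to‑leaf paths. As $w$ is odd, the number of one‑vertex paths in $\P$ is odd, hence at least one; using that a grey leaf, being a child of the root, cannot be covered by a one‑vertex path, at least one such path sits on a white leaf $\ell_0$. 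Since $\ell_0$ is not short, its unique neighbour $v$ has degree at most $2$; as $\ell_0$ is white, $v$ is not the root, and an uncolored non‑root vertex would be a branching vertex, so $v$ is a colored non‑leaf vertex of degree exactly $2$ and must be covered by some long leaf‑to‑leaf path $P$. But the endpoints of $P$ are leaves distinct from $v$, so $P$ runs through $v$ and uses both edges at $v$, one of which is $v\ell_0$; hence $\ell_0$ is an endpoint of $P$, contradicting that $\ell_0$ is covered only by its own one‑vertex path. Thus no cover of cost $w$ exists and $\cost(T')=w+1$.

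The remaining work is to make this rigidity analysis airtight: checking that a cost‑$w$ cover cannot afford a path without a leaf endpoint, pinning down the behaviour of the single grey leaf, and dealing separately with the few trivially small trees, where the claim must simply be verified directly.
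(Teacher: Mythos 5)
Your overall route is the same as the paper's: the lower bound $w$ comes from Lemma~\ref{lm:bounds}, the even case is settled by the pairing of Theorem~\ref{th:balanced}, the odd case with a short leaf is reduced to the even case by peeling off that leaf with a one-vertex path, and the odd case without a short leaf is handled by showing that a cost-$w$ cover would force a one-vertex path on a non-short leaf whose degree-two colored neighbour cannot then be covered. Your ``rigidity'' analysis of a cost-$w$ cover (every path is either a one-vertex path on a leaf or a leaf-to-leaf long path, each leaf covered exactly once) is in fact a welcome sharpening of the paper's one-line version of this step, which merely asserts that the neighbour ``is not covered''; your version correctly explains why the only way to cover that neighbour would double-cover $\ell_0$.

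There is one genuine flaw, in the step where you produce a one-vertex path on a \emph{white} leaf: the claim that ``a grey leaf, being a child of the root, cannot be covered by a one-vertex path'' is false. A single grey vertex is a legal short path of cost $1$ (clause (i) of the cost definition applies to any colored vertex), so nothing a priori prevents the unique one-vertex path of a cost-$w$ cover from sitting on the grey leaf, in which case your contradiction never gets off the ground --- the grey leaf's neighbour is the uncolored root, which needs no covering. The paper's device is to recolor the single grey leaf white at the outset (legitimate because with at most one grey leaf no grey--grey path of cost $1$ is available, so the grey leaf behaves exactly like a white one), but even after that recoloring the neighbour argument must be supplemented for the leaf hanging off the root: if the root is a branching point that leaf is short and we are in the other case, while if the root has degree two its other colored neighbour (if any) again forces a long path back into the grey leaf. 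You should either adopt the recoloring and treat the root-adjacent leaf separately, or argue directly that when the lone one-vertex path sits on the grey leaf the first colored vertex on the root's other branch cannot be covered without returning to that leaf. As you anticipate in your last paragraph, the degenerate configurations (e.g.\ a degree-one root whose only child is the grey leaf) must indeed be excluded or checked by hand.
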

\begin{proof}
Since we have at most one grey leaf, we can not use a ``cheap'' grey-grey path to cover it. So we can change the color of that vertex into white without changing the cost of the tree and thus assume that all leaves are white.

If the number of leaves is even, then the result is a direct consequence of Lemma~\ref{lm:bounds} and Theorem \ref{th:balanced}. If the number of leaves is odd, but there is a short leaf, then we cover that leaf with a short path. Deleting it from the tree we are back to the previous case.

Finally assume that $w$ is odd but $\cost(T')=w.$ Then each leaf is covered once in an optimal cover, and one of them is covered by a short path. If this leaf is not a short one, however, then its colored neighbor is not covered, a contradiction.
For simplicity we fix: in this case the constructed optimal cover contains a long path which does not cover any branching vertex. This path will be called a {\em half--path}. \qed
\end{proof}
\noindent Let's remark that Lemma~\ref{lm:white-cost} for white--only trees is certainly not new: actually it was proved as early as 1995 (see \cite{hp95}). But the consideration of more general white--grey trees raises several problematic issues. One of them is that in the literature, known to the authors, grey vertices which are not leaves have not been studied. However, the white--grey trees are constructed in connection with the genome rearrangement problem (\cite{bms09}) and grey vertices can appear in non-leaf positions.

Another problem that paper \cite{bms09} fails to determine is the exact cost of a minimal colored cover for some cases. Here we give only one of them. (The references relate to the relevant sections of that paper.) Assume that the $\R$ of $T$ has two neighbors: one is a grey leaf ($g=1$), and the other one is a branching vertex. Furthermore assume that $w$ is odd, and no white leaf is short. Then we are in the scope of Theorem 5 of \cite{bms09}. Since $g$ is odd and $T_c$ is a {\em fortress} or {\em junior fortress}, we are to apply the case ``otherwise'' of Theorem 5. That formula now gives: $\cost(T)= w + \lceil g/2 \rceil +1 = w+2$ while the proper cost is only $w + \lceil g/2 \rceil = w+1$.

\bigskip\noindent Before we give our main result we introduce one more notion: when among the children of the $\R$ there is exactly one child that is not a grey leaf, then the (colored) vertices between the $\R$ and the first branching point are called {\em dangerous}.
\begin{theorem}\label{tm:counting}
Let $T$ be a white--grey tree with $g$ grey and $w$ white leaves. Let $T_w$ be derived from $T$ by deleting the grey leaves (and the $\R$ if it would become a leaf). \\
{\rm (1)} If $T$ does not have any dangerous vertex then
\begin{displaymath}
  \cost(T)=\left\{
    \begin{array}{ll}
      w + 1 + \left\lceil\frac{g-1}2 \right\rceil, &\mbox{if $w$ is odd and there is no short leaf in $T_w$;}\\
      w + \left\lceil\frac{g}2 \right\rceil, &\mbox{otherwise}.
    \end{array}
  \right.
\end{displaymath}\\
{\rm (2)} If $T$ has some dangerous vertices (and $T_w$ has $(w+1)$ leaves) then
\begin{displaymath}
  \cost(T)=\left\{
    \begin{array}{ll}
      (w+1) + 1 + \max\left\{ 0, \left\lceil\frac{g-2}2 \right\rceil\right\}, &\mbox{if $(w+1)$ is odd and there is no short leaf in $T_w$;}\\[2pt]
      (w+1) + \left\lceil \frac{g}2 \right\rceil, &\mbox{if } \left \{\parbox{7cm}{$(w+1)$ is odd, there is only one short leaf in $T_w$, and that leaf is white and dangerous;}\right. \\[2pt]
      (w+1) + \left\lceil\frac{g-1}2 \right\rceil, &\mbox{otherwise}.
    \end{array}
  \right.
\end{displaymath}
\end{theorem}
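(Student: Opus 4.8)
The plan is to combine Lemma~\ref{lm:reduction}, which reduces the computation of $\cost(T)$ to the pair $(\cost(T_w), f)$ where $f$ is the maximal number of free paths in a nice optimal cover of $T_w$, with Lemma~\ref{lm:white-cost}, which gives $\cost(T_w)$ for a tree with at most one grey leaf. Note $T_w$ has $w$ leaves when $T$ has no dangerous vertex (all of them white), and $w+1$ leaves when $T$ has dangerous vertices (one extra colored leaf, namely the last colored vertex on the unique non-grey-leaf branch from the $\R$, which may be white or the uncolored former root turned white; in any case it is colored). Having substituted $\cost(T_w)$ from Lemma~\ref{lm:white-cost}, the whole theorem boils down to determining $f$, the maximal number of free paths, in each of the cases of the case distinction.

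First I would recall what ``free'' means: a path $P$ in $T_w$ is free iff it can be extended through the $\R$ to reach a grey leaf without increasing its cost. I would analyze this case by case on the structure of the optimal cover of $T_w$ produced in the proof of Lemma~\ref{lm:white-cost}. If $w$ (resp.\ $w+1$) is even, an optimal cover of $T_w$ consists of $w/2$ (resp.\ $(w+1)/2$) long paths all passing through a common balanced vertex; one should check that at least two of these long paths can be rerouted/chosen to pass through the $\R$ and hence be extended to grey leaves for free, giving $f \ge 2$; and $f \le 2$ because by Theorem~\ref{tm:nice}(1) a nice cover has at most two mixed paths, so at most two free paths are actually used — but for the counting in Lemma~\ref{lm:reduction} what matters is how many the cover \emph{can} supply, which I would argue is exactly the number we need, capped appropriately. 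When $w$ (resp.\ $w+1$) is odd and there is no short leaf, the optimal cover has an extra unit of cost and contains a \emph{half-path} (a long path covering no branching vertex, as fixed in the proof of Lemma~\ref{lm:white-cost}); this half-path can absorb a grey leaf essentially as a bonus, which is where the ``$+1+\lceil(g-1)/2\rceil$'' versus ``$+\lceil g/2\rceil$'' dichotomy comes from. In the dangerous case, the single short leaf being white and dangerous is special because the dangerous vertex sits on the path from the $\R$ outward, so a short path on it interacts with grey-leaf coverage differently than a generic short leaf would — this accounts for the separate middle case in part~(2), where effectively $f=1$ rather than $f=0$ or $f=2$.

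Concretely, the steps in order: (a) identify the leaves of $T_w$ and apply Lemma~\ref{lm:white-cost} to get $\cost(T_w)$ in each parity/short-leaf subcase; (b) in the even subcase, exhibit a nice optimal cover of $T_w$ with two free long paths (reroute two of the balanced-vertex paths through the $\R$) and argue no nice optimal cover can have its mixed-path budget exceeded, so $f=2$ and $\max\{0,\lceil(g-f)/2\rceil\}=\lceil g/2\rceil$ after the arithmetic, i.e.\ $\cost(T)=\cost(T_w)+\lceil g/2\rceil$; (c) in the odd-without-short-leaf subcase, show the half-path yields one free path and the rest of the reasoning yields exactly one more (or show $f=1$ suffices), producing the ``$+1+\lceil(g-1)/2\rceil$'' form — carefully, here $\cost(T_w)=w+1$ (resp.\ $(w+1)+1$) already carries the extra $1$; (d) in the dangerous case with a unique short leaf that is white and dangerous, show $f=1$ and combine; (e) collect all subcases and simplify the ceilings (using $\lceil(g-1)/2\rceil + \lceil 1/2\rceil$-type identities, and that $\max\{0,\cdot\}$ only bites when $g\le f$) to match the stated formulas, reconciling part~(1)'s two cases and part~(2)'s three cases exactly.

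The main obstacle I expect is step (b)/(c): proving the exact value of $f$ rather than just a bound. It is easy to see $f\le 2$ from Theorem~\ref{tm:nice}(1), and it is easy to extend one or two paths through the $\R$ when the $\R$ happens to lie on a long path of some optimal cover; the delicate part is showing that one can \emph{always} choose a nice optimal cover of $T_w$ achieving the claimed $f$, in particular that the balanced vertex / half-path of Lemma~\ref{lm:white-cost} can be positioned so that the required number of its paths genuinely pass through (or can be rerouted through) the $\R$ without inflating the cost — and conversely, in the ``otherwise'' dangerous subcase, that $f$ cannot be pushed higher. The second subtle point is purely arithmetic but error-prone: checking that $\cost(T_w)+\max\{0,\lceil(g-f)/2\rceil\}$ with the values of $\cost(T_w)$ from Lemma~\ref{lm:white-cost} and the values of $f$ just determined collapses to precisely the piecewise formulas in the statement, including the correct handling of small $g$ (e.g.\ $g=0$, $g=1$, $g=2$) where the $\max$ with $0$ and the parity of $g-f$ matter. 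I would do these verifications last, once all structural claims about $f$ are nailed down.
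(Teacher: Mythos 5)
Your overall plan is exactly the paper's: combine Lemma~\ref{lm:reduction} with Lemma~\ref{lm:white-cost} and pin down, case by case, the maximal number $f$ of free paths in a nice optimal cover of $T_w$. However, the concrete values of $f$ you assert are wrong in several cases, and the error traces back to a misreading of what ``free'' means. A path is free only if it can be \emph{extended beyond one of its endpoints} to reach a grey leaf at unchanged cost; merely rerouting a long path so that it passes through the $\R$ buys nothing, because a grey leaf can only be appended at an end, and an endpoint that is a genuine white leaf of $T$ cannot be extended at all. Consequently, in part~(1) (no dangerous vertices) every leaf of $T_w$ is a leaf of $T$, so the leaf--leaf long paths of an optimal cover are never free: in the even / short-leaf case one has $f=0$ (not $f=2$ as you claim), and in the odd, no-short-leaf case only the half--path of Lemma~\ref{lm:white-cost} is free, giving $f=1$. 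In part~(2) the free paths are exactly those ending at the leaf of $T_w$ that lies on the dangerous branch (such a path continues through the $\R$ to a grey leaf at cost $2$), plus possibly the half--path: this yields $f=2$ in the odd, no-short-leaf case, $f=1$ in the ``otherwise'' cases, and $f=0$ in the case of a unique short leaf that is white and dangerous, since extending the forced short path on that white vertex turns a cost-$1$ path into a cost-$2$ one. Your claim that this last case has ``effectively $f=1$'' is therefore wrong, as is your $f=2$ for the even cases; note also that the upper bounds on $f$ come from the structure of cost-optimal covers of $T_w$ (every long path in a cover of cost equal to the number of leaves joins two leaves), not from the two-mixed-paths bound of Theorem~\ref{tm:nice}(1).

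These are not presentational slips: with your values the formulas of the theorem do not come out. If $f=2$, Lemma~\ref{lm:reduction} gives $\cost(T_w)+\lceil (g-2)/2\rceil=\cost(T_w)+\lceil g/2\rceil-1$, which cannot equal the $\cost(T_w)+\lceil g/2\rceil$ you claim for the even subcase; and $f=1$ in the white-dangerous-short-leaf case would give $(w+1)+\lceil (g-1)/2\rceil$, i.e.\ the ``otherwise'' line of part~(2) rather than the stated $(w+1)+\lceil g/2\rceil$. You also leave hedged precisely the point that distinguishes part~(2), Case~1 from part~(1): there one must exhibit an optimal cover of $T_w$ in which a leaf--leaf long path covers the dangerous vertices, so that this path \emph{and} the half--path are free, giving $f=2$ and the term $1+\max\{0,\lceil (g-2)/2\rceil\}$. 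Until the freeness criterion is corrected and each value of $f$ is established (lower bound by exhibiting a suitable optimal cover, upper bound by the structure of minimum-cost covers), the proof does not go through.
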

\begin{proof}
(1) Assume that $w$ is odd and there is no short leaf in $T_w$. Then, due to Lemma~\ref{lm:white-cost}, $\cost(T_w)=w+1$ and the half--path of the solution is clearly free, so $f=1$. Otherwise $\cost(T_w)=w$, but we have no free path at all, so $f=0$. In both cases apply Lemma~\ref{lm:reduction}.

\medskip \noindent(2) {\bf Case 1}: Assume that $(w+1)$ is odd and there is no short leaf in $T_w$. Then $\cost(T_w)=(w+1)+1$. In the derived optimal cover $\P_w$ of $T_w$ a leaf-leaf type long path $P$ covers the dangerous vertices. Then the path $P$ and the half--path of $\P_w$ are free, so $f=2$. Then apply Lemma \ref{lm:reduction}

\noindent {\bf Case 2}: Assume now that $(w+1)$ is odd, there is only one short leaf $\ell$ in $T_w$, and that leaf is white and dangerous. Then $\cost(T_w)=w+1$. Moreover, an optimal cover of $\P_w$ should contain the short (non--free) path covering $\ell$, and there is no other free path, thus $f=0$. Lemma~\ref{lm:reduction} finishes the case.

\noindent {\bf Case 3}: The ``otherwise'' cases: Assume first that $(w+1)$ is odd, there is only one short leaf $\ell$, and that vertex is a grey (therefore also a dangerous) vertex. Then $\cost(T_w)=w+1$. Moreover, an optimal cover of $T_w$ should contain the short path covering $\ell$. But $\ell$ is grey, so the path covering $\ell$ is free. Thus $f=1$. Then apply Lemma~\ref{lm:reduction}.

Assume now that $(w+1)$ is odd, and there is a short leaf $\ell$ which is not dangerous. Then there is an optimal cover of $T_w$ in which the dangerous vertices are covered by a long path $P$. Then $P$ is free, so $f\ge 1$. Since $\cost(T_w)= (w+1)$, in an optimal cover $\P_w$ of $T_w$ there is only one long path which is free because all long paths in $\P_w$ contain two leaves. So $f\le 1$, i.e. $f=1$. Now apply Lemma \ref{lm:reduction}.

Assume finally that $(w+1)$ is even. Then $\cost(T_w)=(w+1)$, in an optimal cover of $T_w$ there is only one long path which is free because all long paths in an optimal cover should contain two leaves. So $f\le 1$. However, there is an optimal cover of $T_w$ containing a long path which covers the dangerous vertices. So $f\ge 1$. Thus $f=1$. Now apply Lemma \ref{lm:reduction}. \qed
\end{proof}

\bibliographystyle{elsarticle-num}
\bibliography{biblio}

\begin{thebibliography}{1}
\expandafter\ifx\csname url\endcsname\relax
  \def\url#1{\texttt{#1}}\fi
\expandafter\ifx\csname urlprefix\endcsname\relax\def\urlprefix{URL }\fi
\expandafter\ifx\csname href\endcsname\relax
  \def\href#1#2{#2} \def\path#1{#1}\fi

\bibitem{hp95}
S.~Hannenhalli, P.~A. Pevzner, Transforming men into mice (polynomial algorithm
  for genomic distance problem), in: Proc. 36th Annu. Symp. Found. Comput.
  Sci., FOCS 1995, IEEE Press, 1995, pp. 581--592.

\bibitem{bms09}
A.~Bergeron, J.~Mixtacki, J.~Stoye, A new linear time algorithm to compute the
  genomic distance via the double cut and join distance, Theor. Comput. Sci.
  410 (2009) 5300--5316.

\bibitem{gl70}
A.~Gy\'arf\'as, J.~Lehel, A helly-type problem in trees, in: P.~Erd\H{o}s,
  A.~R\'enyi, V.~T. S\'os (Eds.), Combinatorial Theory and its Applications II,
  Vol.~4 of Colloquia Mathematica Scocietatis J\'anos B\'olyai, North Holland,
  1970, pp. 571--584, (Balatonf\"ured, 1969).

\end{thebibliography}

\end{document}